\documentclass[runningheads]{llncs}
\usepackage{balance}
\usepackage{graphicx}
\usepackage{color}
\usepackage{ifthen}
\usepackage{xspace}
\usepackage{algorithm}
\usepackage{algorithmic}
\usepackage{amsmath, amssymb, amsfonts}
\usepackage[utf8]{inputenc}
\usepackage{fullpage}
\usepackage{hyperref}
\usepackage{epstopdf}
\usepackage[T1]{fontenc}
\begin{document}
\title{Dynamics and Convergences for Markov Coevolutionary Opinion Formation Games in Dynamic Social Networks}
%
%
\author{Po-An Chen\inst{1} \and
Chi-Jen Lu\inst{2} \and
Chuang-Chieh Lin\inst{3} \and
Jim Shi\inst{4} \and
Chih-Chieh Hung\inst{5}}
\authorrunning{P.-C. Chen et al.}
%
\institute{Institute of Information Management, National Yang Ming Chiao Tung University, Taiwan\\
\email{poanchen@nycu.edu.tw}\\
\and
Institute of Information Science, Academia Sinica, Taiwan\\
\email{cjlu@iis.sinica.edu.tw}\\
\and 
Department of Computer Science and Engineering, National Taiwan Ocean University, Taiwan\\
\email{josephcclin@mail.ntou.edu.tw}\\
\and
Tuchman School of Management, New Jersey Institute of Technology, USA\\
\email{jshi@njit.edu}\\
\and 
Department of Management Information and Information Systems, National Chung Hsin University, Taiwan
\\\email{smalloshin@nchu.edu.tw}
}

\maketitle              
\begin{abstract}
While deterministic variants of the coevolutionary opinion formation games such as the $K$-Nearest Neighbor ($K$-NN) game (\cite{bhawalkar}) in a dynamic social network environment can sometimes be shown to stabilize using potential functions or localized smoothness arguments, introducing stochasticity fundamentally changes the mathematical landscape. In the ``$K$-NN Markov game", network topologies evolve via a time-varying, randomized selection process. Proving whether such a system, as a special case of general-sum \emph{Markov games}, converges to an equilibrium is a profoundly non-obvious and challenging theoretical question.
 
Multiagent reinforcement learning has been shown to derive Nash (minimax) equilibria in two-player zero-sum Markov games and Markov potential games (along with some price-of-anarchy types of results). In recent work, optimistic dynamics are shown to converge to correlated equilibria in general-sum Markov games while the price-of-anarchy bounds are unknown.     
We thus analyze playing specific no-regret algorithms in general-sum Markov games for convergence to a stricter set than correlated equilibria.   

We integrate the convergence analysis techniques from multi-agent reinforcement learning in works of Wei et al. and online learning in a recent work of Anagnostides et al.. Specifically in (general-sum) Markov games, since the regret of the optimistic gradient ascent algorithm would have extra positive terms coming from Q-values, taking care of these terms requires non-trivial extra work setting an appropriate range of our learning rate and deriving the threshold on the number of iterations for convergence or a bounded price of anarchy, significantly different from those in the assumption in a main technical theorem of Anagnostides et al.. 
We analyze a weaker sense of convergences to approximate Nash equilibria by playing optimistic gradient ascents in general-sum Markov games.   
Specific no-regret algorithms beyond zero-sum Markov games converge not only to correlated equilibria, but also to approximate Nash equilibria (a stricter set than correlated equilibria) or a bounded price of anarchy.

\keywords{Markov Coevolutionary Opinion Games, Optimistic Gradient Ascent, Approximate Nash Equilibria.}
\end{abstract}
\section{Introduction}
In classical models of social networks, opinion dynamics and network topology are frequently analyzed in isolation \cite{degroot1974,friedkin1990}\cite{bindel:kleinberg,bhawalkar,chen2016}. Either opinions change over a fixed graph structure, or the network adapts based on static attributes of the nodes. However, real-world social environments exhibit a \textit{coevolutionary feedback loop}: individuals adjust their expressed opinions to better align with their friends, while simultaneously rewriting their social ties to gravitate toward those with compatible views \cite{bhawalkar}. When formalized as a game, this interaction yields a complex, dual-layered dynamical system where continuous actions (expressed opinions) shape discrete states (network topologies), which in turn dictate the payoffs that drive subsequent actions.

While deterministic variants of these coevolutionary games such as the $K$-Nearest Neighbor ($K$-NN) game (\cite{bhawalkar}) can sometimes be shown to stabilize using potential functions or localized smoothness arguments, introducing stochasticity fundamentally changes the mathematical landscape. On the other hand, Wang et al. proposed a language-based simulation framework in which agents interact through text, update their beliefs through reasoning, and adjust their network connections based on opinion similarity \cite{wang2025}. Their results showed that LLM agents reproduce polarization patterns more faithfully  than traditional numerical models like the Friedkin-Johnsen models~\cite{friedkin1990}. In the $K$-NN Markov game, network topologies evolve via a time-varying, randomized selection process. Rather than connecting strictly to the absolute closest peers, agents sample a neighborhood based on localized probabilities where ``roughly nearby'' opinions have higher weights.

Proving whether such a system converges to an equilibrium is a profoundly non-obvious and challenging theoretical question for several key reasons:
\begin{enumerate}
\item \textbf{The Coevolutionary Feedback Loop:} A slight shift in a single player's expressed opinion can alter the transition probabilities across the entire network. This creates a highly coupled feedback loop where continuous actions dynamically reshape the state space boundaries, potentially driving the system into persistent oscillations or chaotic trajectories.
\item \textbf{Stochastic Disruption of Stability:} In a deterministic system, once an optimal or near-optimal topology is found, the system can remain locked there. In this stochastic setting, the randomized neighbor selection ensures that there is always a non-zero probability of ``suboptimal'' link formation. These random structural shocks can repeatedly dislodge the system from localized opinion consensus, forcing agents to continuously re-optimize their expressions.
\item \textbf{Non-Linear State Transitions:} The probability distribution governing the next network topology is a non-linear product of exponential softmax weights, parameterized by the continuous action profiles of all agents. Standard Markov game architectures typically assume fixed or linear transition geometries. Here, the transitions are driven by continuous distances, rendering conventional matrix-based convergence proofs inapplicable.
\end{enumerate}

Understanding whether this system stabilizes requires analyzing the existence and convergence to certain equilibrium concepts. If the expressed opinions and time-varying topologies do stabilize (either pointwise or in distribution), it implies that structural and ideological predictability can emerge out of decentralized, noisy interactions. Conversely, if they do not, it reveals that the mere act of randomized neighbor tracking can permanently destabilize a society, leading to endless cycles of polarization and realignment. To rigorously analyze these dynamics, we first formalize this system within the explicit mathematical syntax of a Markov game in Section~\ref{sec:k-nn}.

\subsection{Convergences to Equilibria in Games}
\label{sec:problems}



When each participant in a repeated game uses a \emph{no-regret} learning algorithm to choose actions, it is well known that the empirical frequency of the participants will converge to coarse correlated equilibria (CCE), 
meaning that the ``time-averaged'' strategy will converge to more relaxed equilibria (a larger set).
Plenty of work has been done showing that when using a specific no-regret algorithm, convergence of plays to equilibria can be shown and better quality of outcomes in terms of so-called ``the price of anarchy'' (POA) can be achieved in some classes of games.

Series of papers affirmatively answers one of the questions that multiagent learning wants to answer: in ``what types of games'' and ``what types of no-regret learning algorithms'' can strategy profiles reach convergence? These results confirm that a large class of learning algorithms can converge to an approximate equilibrium in a large class of games, and this is to converge to a smaller set than CCE while further ensuring an upper bound on the ratio of its social cost to the optimal social cost (i.e., price of anarchy). 
Can similar or other specific no-regret learning algorithms be used in more general (or different) classes of games such as \emph{Markov} games (also called stochastic games) \cite{shapley} for convergences with well-bounded convergence rates?

The framework of Markov games is often used to model strategic interactions among agents in a stochastic environment (also modeled as an extra agent). It can be seen as a generalization of Markov Decision Processes (MDPs) from a single agent to multiple agents. Multiagent reinforcement learning has been shown to converge to Nash (or minimax) equilibria in two-player \emph{zero-sum} Markov games~\cite{wei2021last} and Markov \emph{potential} games ~\cite{leonardos,chen:zhang} (along with
some price-of-anarchy types of results)~\cite{chen:zhang}, where 
a Markov potential game has a corresponding potential function for each state~\cite{leonardos}. 

There are several real-world applications in different domains: in competitive pricing and market entry, zero-sum Markov games model how firms adapt strategies over time under uncertainty; in adversarial supply chain management, zero-sum stochastic games model such adversarial interactions with sequential decisions; in negotiation and contracting in adversarial contexts, zero-sum Markov games can simulate multi-round negotiations where each side adjusts based on observed concessions or aggressions. 
However, a recent study on inventory management~\cite{liu:hu} finds that policies learned via multi-agent deep reinforcement learning perform best in practice when agents are neither fully self-interested nor fully system-focused.
In a recent work, multiagent reinforcement learning in \emph{general-sum} Markov games guarantees convergence to correlated equilibria~\cite{cai:luo}. 

\subsubsection{Our Results.}

Inspired by the convergence result of playing the Optimistic Gradient Descent Ascent (OGDA) algorithm in two-player zero-sum (Markov) games~\cite{wei2021last,wei2021linear}, we study no-regret dynamics in general-sum Markov games:\\
\emph{
Combining the convergence analysis techniques from multi-agent reinforcement \cite{wei2021last} and online learning \cite{anagnostides}, we then analyze a weaker sense of convergences to approximate Nash equilibria in general-sum Markov games by playing optimistic gradient ascents, where the result can be readily extended for multiple players.} 

The techniques that we use in comparison to the previous work for general-sum games can be highlighted as follows. In (general-sum) Markov games, the regret of the optimistic gradient ascent algorithm would have extra positive terms, coming from Q-values, that need to be offset by part of the negative terms similar to those in the proof of \cite{anagnostides}[Theorem~A.12], from which the proof of \cite{anagnostides}[Theorem~A.17] follows. 
In our case, this results in assumption~(iii) required in \cite{anagnostides}[Theorem~A.17] unsatisfied since the range of our learning rate (in terms of number of iterations for general-sum two-player games) has to be different from a constant learning rate for general-sum two-player games in assumption~(iii) (or a learning rate in terms of the number of agents for general-sum multi-player games in assumption~(iii)). 
Moreover, the threshold on the number of iterations for convergence or the bounded price of anarchy in (iii) is now different (and is in relation to the number of states, which is not in the threshold on the number of iterations in (iii)).

\subsection{Previous and Other Related Work}
\subsubsection*{Zero-sum bilinear games}
One can formulate the well-known constrained saddle-point problem or equivalently the minimax problem in a zero-sum (bilinear) game as follows:\\ $\min_{\mathbf{x}\in\mathcal{X}} \max_{\mathbf{y}\in\mathcal{Y}}f(x,y)$ where $\mathcal{X}$
and $\mathcal{Y}$ are compact convex sets, and $f$ is a continuous differentiable function that is convex in $\mathbf{x}$ for any fixed $\mathbf{y}$ and concave in $\mathbf{y}$ for any fixed $\mathbf{x}$. 

The set of minimax optimal strategies is denoted by $\mathcal{X}^* = \arg\min_{\mathbf{x}\in\mathcal{X}}\max_{\mathbf{y}\in\mathcal{Y}}f(\mathbf{x},\mathbf{y})$, and the set
of maximin optimal strategies is denoted by $\mathcal{Y}^* = \arg\max_{\mathbf{y}\in\mathcal{Y}}\min_{\mathbf{x}\in\mathcal{X}}f(\mathbf{x},\mathbf{y})$. Note that $\mathcal{X}^*$ and $\mathcal{Y}^*$ are compact and convex, and any pair $(\mathbf{x}^*,\mathbf{y}^*) \in \mathcal{X}^*\times\mathcal{Y}^*$ is a Nash  (minimax) equilibrium that satisfies $f(\mathbf{x}^*,\mathbf{y}) \leq f(\mathbf{x}^*,\mathbf{y}^*) \leq f(\mathbf{x},\mathbf{y}^*)$ for any $(\mathbf{x},\mathbf{y}) \in \mathcal{X}\times\mathcal{Y}$.
For a point $(\mathbf{x},\mathbf{y}) \in \mathcal{X}\times\mathcal{Y}$, we further define 
\[F(\mathbf{x},\mathbf{y}) = (\nabla_\mathbf{x}f(\mathbf{x},\mathbf{y}),-\nabla_\mathbf{y}f(\mathbf{x},\mathbf{y})).\] The goal is to find a point $\mathbf{x},\mathbf{y}\in\mathcal{X}\times\mathcal{Y}$ that is close to the set of Nash (minimax) equilibria $\mathcal{X}^*\times\mathcal{Y}^*$.

It is well-known (e.g., \cite[Section~3]{abernethy:bartlett}) that two players playing $o(T)$-regret algorithms $\mathcal{L}_\mathcal{X}$ and $\mathcal{L}_\mathcal{Y}$, respectively, in a zero-sum game with a cost function $f: \mathcal{X}\times \mathcal{Y}\to \mathbb{R}$ of the form $f(\mathbf{x},\mathbf{y}) = \mathbf{x}^{\top} M\mathbf{y}$ for some $M\in \mathbb{R}^{n\times m}$ give a version of minimax equilibrium.
This standard technique and result have been existing for playing generic no-regret algorithms in a zero-sum bilinear game, which is restated in the following.
\begin{theorem}[Corollary 3 of \cite{abernethy:bartlett}] \label{thm:minmax}
For compact convex sets $\mathcal{X} \subset \mathbb{R}^n$ and $\mathcal{Y} \subset \mathbb{R}^m$ and any biaffine function\footnote{A biaffine function $f: \mathcal{X} \times \mathcal{Y} \mapsto \mathbb{R}$ satisfies $f(\alpha \mathbf{x} + (1-\alpha)\mathbf{x}',\mathbf{y}) = \alpha f(\mathbf{x},\mathbf{y}) + (1-\alpha)f(\mathbf{x}',\mathbf{y})$ and
$f(\mathbf{x},\alpha \mathbf{y} + (1-\alpha)\mathbf{y}') = \alpha f(\mathbf{x},\mathbf{y}) + (1-\alpha)f(\mathbf{x},\mathbf{y}')$ for every $0 \leq \alpha \leq 1$, $\mathbf{x},\mathbf{x}'\in \mathcal{X}$ and $\mathbf{y},\mathbf{y}'\in \mathcal{Y}$.}
$f : \mathcal{X} \times \mathcal{Y} \mapsto \mathbb{R}$, we have
\begin{eqnarray}
\min_{\mathbf{x}\in \mathcal{X}}\max_{\mathbf{y}\in \mathcal{Y}}f(\mathbf{x},\mathbf{y})=\max_{\mathbf{y}\in \mathcal{Y}}\min_{\mathbf{x}\in \mathcal{X}}f(\mathbf{x},\mathbf{y}).
\end{eqnarray}
\end{theorem}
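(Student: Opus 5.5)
The plan is to prove the minimax identity through no-regret dynamics, following the approach sketched just before the statement. One direction is elementary. For any $\mathbf{x}'\in\mathcal{X}$ and $\mathbf{y}'\in\mathcal{Y}$ we have $\min_{\mathbf{x}} f(\mathbf{x},\mathbf{y}') \le f(\mathbf{x}',\mathbf{y}') \le \max_{\mathbf{y}} f(\mathbf{x}',\mathbf{y})$. Taking the max over $\mathbf{y}'$ on the left and the min over $\mathbf{x}'$ on the right gives $\max_{\mathbf{y}}\min_{\mathbf{x}} f \le \min_{\mathbf{x}}\max_{\mathbf{y}} f$. All these extrema are attained: a biaffine function on compact convex sets is continuous, being the restriction of a polynomial of degree at most two, and the inner min/max functions are then continuous as well.

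For the reverse inequality, let the $\mathbf{x}$-player run a no-regret algorithm $\mathcal{L}_\mathcal{X}$ on the convex losses $\ell_t(\mathbf{x}) = f(\mathbf{x},\mathbf{y}_t)$. Let the $\mathbf{y}$-player run $\mathcal{L}_\mathcal{Y}$ on the losses $-f(\mathbf{x}_t,\cdot)$, which are affine and hence convex. Both loss sequences are affine, uniformly bounded and uniformly Lipschitz on compact domains, so online gradient descent with step size $\propto 1/\sqrt{t}$ gives both players regret $R_T = O(\sqrt{T})$.

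Define the averages $\bar{\mathbf{x}}_T = \frac1T\sum_t \mathbf{x}_t$ and $\bar{\mathbf{y}}_T = \frac1T\sum_t \mathbf{y}_t$; they lie in the sets by convexity. The chain of inequalities is
\begin{align*}
\min_{\mathbf{x}}\max_{\mathbf{y}} f(\mathbf{x},\mathbf{y}) &\le \max_{\mathbf{y}} f(\bar{\mathbf{x}}_T,\mathbf{y}) = \max_{\mathbf{y}} \frac1T\sum_t f(\mathbf{x}_t,\mathbf{y}) \le \frac1T\sum_t f(\mathbf{x}_t,\mathbf{y}_t) + \frac{R^{\mathcal{Y}}_T}{T}\\
&\le \min_{\mathbf{x}}\frac1T\sum_t f(\mathbf{x},\mathbf{y}_t) + \frac{R^{\mathcal{X}}_T + R^{\mathcal{Y}}_T}{T} = \min_{\mathbf{x}} f(\mathbf{x},\bar{\mathbf{y}}_T) + \frac{R^{\mathcal{X}}_T+R^{\mathcal{Y}}_T}{T}\\
&\le \max_{\mathbf{y}}\min_{\mathbf{x}} f(\mathbf{x},\mathbf{y}) + o(1).
\end{align*}
Here the two equalities use affinity in $\mathbf{x}$ (with $\mathbf{y}$ fixed) and in $\mathbf{y}$ (with $\mathbf{x}$ fixed) to pull the average inside $f$. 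The middle inequalities are exactly the regret guarantees of the two players. Letting $T\to\infty$ closes the gap and proves the theorem.

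The main point needing care is justifying the $o(T)$ regret guarantee in this generality. The domains are arbitrary compact convex sets, not simplices, so I would use projected online gradient descent, whose regret bound $O(DG\sqrt{T})$ depends only on the diameter $D$ of the set and on a bound $G$ on the gradient norm. To obtain $G$, I would show that the gradient of a biaffine $f$ is bounded. Writing $f(\mathbf{x},\mathbf{y}) = \mathbf{x}^\top M\mathbf{y} + \mathbf{a}^\top\mathbf{x} + \mathbf{b}^\top\mathbf{y} + c$, the gradient is affine, so it is bounded on the compact sets. That this representation holds is clear when $\mathcal{X}$ and $\mathcal{Y}$ are full-dimensional. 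In general, I would extend $f$ affinely from the affine hulls of $\mathcal{X}$ and $\mathcal{Y}$, work in the coordinates of those hulls, and then use the representation there. Once the gradient bound is in place, the rest of the argument is mechanical.
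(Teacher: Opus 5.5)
Your argument is correct and matches the route the paper indicates. The paper gives no proof of its own: it cites the result and remarks just before it that two players running $o(T)$-regret algorithms on a biaffine zero-sum game yield a minimax equilibrium, which is exactly your averaged no-regret chain of inequalities. Your handling of attainment of the extrema and of the gradient bound, via the affine-hull representation of $f$, fills in details the citation leaves implicit.
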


However, simple algorithms such as Gradient Descent Ascent
(GDA) and Multiplicative Weights Update (MWU) are known to cycle and fail to converge even in bilinear cases (e.g.,~\cite{bailey}). Nonetheless, in~\cite{wei2021linear} the authors derived the last-iterate convergence results for the Optimistic Gradient Descent Ascent (OGDA) algorithm (OGDA) and the Optimistic Multiplicative Weights Update (OMWU) in the constrained setting.

\subsubsection*{Beyond zero-sum and potential games}
In recent years, this line of research has been focused on finding ``which classes of games'' and ``which types of learning algorithms'' can enhance the convergence results. There are many potential types of algorithms and applicable classes of games. New developments such as \cite{hsieh,golowich} focus on ``monotonic games''~\cite{golowich}, which are not necessarily potential games, using a class of ``optimistic gradient descents'' modified from the classic gradient descents, which will be introduced in Section~\ref{sec:ogda}. The \emph{optimistic gradient descent} algorithm, i.e., intuitively the algorithm of ``two gradient updates", achieves the convergence of the strategy profile in the ``time-average'' and ``standard (that is, last-iterate)'' ways~\cite{golowich}.

\subsubsection*{Single-call stochastic extra-gradient methods in monotone games}
The game~$\mathcal{G}=(N,(\mathcal{K})_{i=1}^n,(f_i)_{i=1}^n))$ is monotone if for
all $\mathbf{x}', \mathbf{x} \in\mathcal{K}$, it holds that $\langle F_\mathcal{G}(\mathbf{x}')-F_\mathcal{G}(\mathbf{x}), \mathbf{x}' - \mathbf{x}\rangle\geq 0$. In such a case, we say also that $F_\mathcal{G}$ is a monotone operator.
The problem that is closely related to the \emph{monotonic game} is so-called ``variational inequalities (VI)''. Let $\mathcal{K}\subseteq\mathbb{R}^d$ and let $V:\mathbb{R}^d\rightarrow \mathbb{R}^d$ be a single-valued operator. In its most general form, the \emph{variational inequality (VI)} problem associated with~$V$ and $\mathcal{K}$ is as follows~\cite{hsieh}:
$$\hbox{find }\mathbf{x}^*\in\mathcal{K}\hbox{ that satisfies } \langle V(\mathbf{x}^*),\mathbf{x}-\mathbf{x}^* \rangle\geq 0\forall \mathbf{x}\in\mathcal{K}.$$
Under this general \emph{variational inequality problem}, if $f$ is a smooth loss function on $\mathcal{K}=\mathbb{R}^d$ and $V=\nabla f$, then $\mathbf{x}^*\in\mathcal{K}$ is the solution to VI if and only if $\nabla f(\mathbf{x}^*)=0$. Such a monotonic VI problem is \emph{not} a monotone game $\mathcal{G}=(N,(\mathcal{K})_{i=1}^n,(f_i)_{i=1}^n))$, but a loss minimization problem with only one participant.
In comparison, note that the monotonicity of operator $V$ is as follows:
$\langle V(\mathbf{x}')-V(\mathbf{x}),\mathbf{x}'-\mathbf{x}\rangle\geq 0,\forall \mathbf{x},\mathbf{x}'\in\mathbb{R}^d,$
which is equal to the convexity of~$f$.

When $V$ is ``smooth'' and ``monotone'', the ``extra-gradient (EG)'' algorithm updating decisions~$x$ can guarantee the convergence of the \emph{average} decision and its convergence rate~$1/T$ when $T$ is the number of iterations \cite{hsieh}. We will introduce the EG algorithm and other algorithms invoking an oracle only once to obtain gradient vectors and to approximate EG, including ``optimistic gradients (OG)''.

\subsubsection*{Last iterate slower than averaged iterate in smooth convex-concave saddle point problems}
Consider a two-player zero-sum monotone game $\mathcal{G}$ so $V(\mathbf{x})$ is $F_\mathcal{G}(\mathbf{x})$. From the average convergence result of~\cite{hsieh}, $\|F_\mathcal{G}(\mathbf{\mathbf{x}}^{(t)})\|$ is chosen as the \emph{approximate potential} function. 
When time $t^*$ is large enough such that $\|F_\mathcal{G}(\mathbf{x}^{(t^*)})\|\leq O(1/\sqrt{T})$ is satisfied, its value will only increase by a small value every time: that is, after $t\geq t^*$ for some~$t^*$ large enough, $\|F_\mathcal{G}(\mathbf{x}^{(t+1)})\|\leq (1+O(1/T))\|F_\mathcal{G}(\mathbf{x}^{(t)})\|.$

\section{Preliminaries}
Some standard equilibrium concepts for strategic games are defined and restated in Appendix~\ref{app:a}.
There are some basic notions about Markov games reviewed and modeling stochastic K-NN games as a special case of Markov games in the following.
\subsubsection{Markov games.}
We consider a Markov game~$\langle\mathcal{N},\mathcal{S},\mathcal{A},P,\vec{r}\rangle$ with $N$ agents whose indices are denoted by $i \in \mathcal{N} = \{1,...,N\}$, state space $\mathcal{S}$, action space $\mathcal{A} = \{\mathcal{A}^1,...,\mathcal{A}^N\}$,
transition function $P:\mathcal{S}\times\mathcal{A}\rightarrow\Delta(\mathcal{S})$, reward functions~$\vec{r} = \{r^i\}_{i\in\mathcal{N}}$ with $r^i:\mathcal{S}\times\mathcal{A}\rightarrow\mathbb{R}$ for each $i\in\mathcal{N}$, and initial state distribution~$\mu\in\Delta(\mathcal{S})$ \cite{agarwal:kakade,daskalakis:foster}. Assume that each agent observes the state $s\in\mathcal{S}$. Under this, we consider \emph{joint policies (i.e., policy profile)},
$\pi: \mathcal{S}\rightarrow\times_{i\in\mathcal{N}}\Delta(\mathcal{A}^i)$, factored as the product of individual policies $\pi^i: \mathcal{S}\rightarrow\Delta(\mathcal{A}^i)$.
For a stationary policy profile $\pi = (\pi_i)_i$ and an initial state~$s$, the expected discounted value that agent~$i$ obtains is
\begin{eqnarray*}
V_\pi^i(s) &=& \mathbf{E}\Bigl[\sum_{t=1}^\infty\gamma^{t-1}r^i(s_t,a_t)|s_1 = s,a_t\sim\pi(s_t),s_{t+1} \sim P(\cdot|s_t,a_t)\forall t>1\Bigr],
\end{eqnarray*}
where $a_t$ is selected according to the policy profile~$\pi$.
We also denote $V^\mu_i(\pi) = \mathbb{E}_{s\sim\mu}[V^s_i(\pi)]$ if the initial state is random and follows distribution~$\mu$.

\paragraph{Nash equilibrium policy profile.} We focus on the solution concept of \emph{approximate Nash policy profile}, as formally defined below.

\begin{definition}[$\varepsilon$-Nash equilibrium (policy profile)] 
The \emph{Nash-gap} of a policy~$\pi$ is defined as
\[\text{Nash-gap}(\pi) = \max_i\Bigl(\max_{\pi'_i}V_{\pi'_i,\pi_{-i}}^i(\mu)
-V_\pi^i(\mu)\Bigr).\]
A policy profile~$\pi = (\pi_1,...,\pi_N)$ is an $\varepsilon$-Nash policy profile if $\text{Nash-gap}(\pi)\leq\varepsilon$. 
\end{definition}

\paragraph{Correlated equilibrium policy profile.} A correlated equilibrium is a joint policy where no
player can increase their value by any strategy modification.
Formally, it is defined as
\begin{definition}[Correlated Equilibrium (Policy Profile) \cite{cai:luo}] 
A joint policy $\pi$ is a correlated equilibrium (CE) if
\[\max_i\max_{\phi_i}V^i_{(\phi_i\diamond\pi_i)\odot\pi_{-i}}(s) - V^i_\pi(s) \leq 0.\] 
A joint policy~$\pi$ is an $\varepsilon$-approximate CE if \[CEGap(\pi) = \max_i\max_{\phi_i} V^i_{(\phi_i\diamond\pi_i)\odot\pi_{-i}}(s) - V^i_\pi(s)\leq\varepsilon,\]
where the modified policy is denoted by $\phi_i\diamond\pi_i$.
\end{definition}

\paragraph{Coarse Correlated equilibrium policy profile.} A coarse correlated equilibrium is a joint policy where no player can increase their value by playing any other independent policy. Formally, it is defined as
\begin{definition}[Coarse Correlated Equilibrium (Policy Profile) \cite{cai:luo}] 
A joint policy~$\pi$ is an $\varepsilon$-approximate coarse correlated equilibrium if 
\[\max_i\max_{\pi'_i}V^i_{\pi'_i\times\pi_{-i}}(s)-V^i_\pi(s)\leq\varepsilon.\]
\end{definition}

\paragraph{Regret in Markov games.} 
The performance of the learner~$i$ (i.e., player~$i$) is measured against the best stationary policy in hindsight given the others' policies fixed, giving rise to the expected regret:
\begin{definition}[Regret of Player~$i$]
\begin{eqnarray*}
R^i_T(s)&=&\max_{\pi'_i}V_{\pi'_i,\pi_{-i}}^i(s)-V_\pi^i(s)\\ 
&=&\max_{\pi'_i}\mathbf{E}\Bigl[\sum_{t=1}^T\gamma^{t-1}r_i(s_t,a_t)|s_1 = s,a_t\sim\pi'_i(s_t),s_{t+1}\sim P(\cdot|s_t,a_t)\forall t\geq 1\Bigr]\\ 
&&-\mathbf{E}\Bigl[\sum_{t=1}^T\gamma^{t-1}r_i(s_t,a_t)|s_1 = s,a_t\sim\pi_i(s_t),s_{t+1} \sim P(\cdot|s_t,a_t)\forall t\geq 1\Bigr]
\end{eqnarray*}
for any~$s$.
\end{definition}

\subsection{Proof of the Markovian Property for Network Topologies}
To prove that the network topologies $(G^{(t)})$ as states satisfy the Markov property, we must show two things:
\begin{itemize}
\item The Controlled Markov Property: Given the entire history of the game, the transition probability to the next network topology depends only on the current state and the current joint action profile.
\item The Autonomous Markov Chain Property: Under stationary Markov policies, the sequence of topologies $\{G^{(t)}\}_{t=0}^\infty$ forms a classical Markov chain where the future topology depends exclusively on the current topology.
\end{itemize}

Let the history of the game up to time step $t$ be denoted by the continuous sequence of observed topologies and joint actions:
\[
H^{(t)} = \left(G^{(0)}, \mathbf{z}^{(0)}, G^{(1)}, \mathbf{z}^{(1)}, \dots, G^{(t-1)}, \mathbf{z}^{(t-1)}, G^{(t)}\right)
\]
where $G^{(\tau)} \in \mathcal{S}$ represents the network state and $\mathbf{z}^{(\tau)} \in \mathcal{A}$ represents the joint expressed opinion vector at time $\tau$.

\subsubsection*{Step~1: Proving the Controlled Markov Property}
To establish that the state space is conditionally Markovian under joint actions, we must verify that for any next state $G^{(t+1)} \in \mathcal{S}$:
\[
P\left(G^{(t+1)} \;\middle|\; H^{(t)}, \mathbf{z}^{(t)}\right) = P\left(G^{(t+1)} \;\middle|\; G^{(t)}, \mathbf{z}^{(t)}\right)
\]

Recall that a state $G^{(t+1)}$ is uniquely defined by the collection of individual node neighbor choices: $G^{(t+1)} = \left(S_1^{(t+1)}, \dots, S_n^{(t+1)}\right)$. According to the problem specification, each node $i \in \mathcal{N}$ conditionally independently samples its neighborhood $S_i^{(t+1)}$ based on the exponential softmax function of expressed opinions. Thus, the joint transition probability factorizes as:
\begin{equation}\label{eq:factorization}
P\left(G^{(t+1)} \;\middle|\; H^{(t)}, \mathbf{z}^{(t)}\right) = \prod_{i=1}^n P\left(S_i^{(t+1)} \;\middle|\; H^{(t)}, \mathbf{z}^{(t)}\right)
\end{equation}

By examining the neighborhood choice probability for an individual agent $i$:
\[
P\left(S_i^{(t+1)} \;\middle|\; H^{(t)}, \mathbf{z}^{(t)}\right) = \frac{\prod_{j \in S_i^{(t+1)}} \exp\left(-\beta \left|z_j^{(t)} - s_i\right|\right)}{\sum_{\substack{S' \subset \mathcal{N} \setminus \{i\} \\ |S'| = K}} \prod_{k \in S'} \exp\left(-\beta \left|z_k^{(t)} - s_i\right|\right)}
\]
Notice that the right-hand side of this expression is a function \textit{exclusively} of the current joint action profile $\mathbf{z}^{(t)}$ and the static, invariant intrinsic opinions $\mathbf{s} = (s_1, \dots, s_n)$. Crucially, it contains no functional dependence on:
\begin{enumerate}
    \item Past network topologies or actions: $\left(G^{(0)}, \mathbf{z}^{(0)}, \dots, G^{(t-1)}, \mathbf{z}^{(t-1)}\right)$
    \item The current network topology structure: $G^{(t)}$
\end{enumerate}

Because the sampling mechanism lacks memory and does not depend on the current graph ties, we can drop the historical terms and the current state from the conditioning set:
\begin{equation}\label{eq:indep}
P\left(S_i^{(t+1)} \;\middle|\; H^{(t)}, \mathbf{z}^{(t)}\right) = P\left(S_i^{(t+1)} \;\middle|\; \mathbf{z}^{(t)}\right)
\end{equation}

Substituting Equation \eqref{eq:indep} back into Equation \eqref{eq:factorization} yields:
\[
P\left(G^{(t+1)} \;\middle|\; H^{(t)}, \mathbf{z}^{(t)}\right) = \prod_{i=1}^n P\left(S_i^{(t+1)} \;\middle|\; \mathbf{z}^{(t)}\right) = P\left(G^{(t+1)} \;\middle|\; \mathbf{z}^{(t)}\right)
\]
By identical logic, conditioning the future state on only the current state $G^{(t)}$ and action $\mathbf{z}^{(t)}$ yields the exact same reduction:
\[
P\left(G^{(t+1)} \;\middle|\; G^{(t)}, \mathbf{z}^{(t)}\right) = P\left(G^{(t+1)} \;\middle|\; \mathbf{z}^{(t)}\right)
\]
Therefore, we obtain:
\[
P\left(G^{(t+1)} \;\middle|\; H^{(t)}, \mathbf{z}^{(t)}\right) = P\left(G^{(t+1)} \;\middle|\; G^{(t)}, \mathbf{z}^{(t)}\right)
\]
This completes the proof of the \textbf{Controlled Markov Property} (in fact, showing an even stronger property: the state transition depends purely on the actions chosen at that step).

\subsubsection*{Step~2: Proving the Autonomous Markov Chain Property.}
Let us now restrict our attention to the network topology sequence $\mathcal{G}^{(t)} = \left(G^{(0)}, G^{(1)}, \dots, G^{(t)}\right)$ under the assumption that players follow stationary Markov policies $\pi_i : \mathcal{S} \rightarrow \Delta(\mathcal{A}_i)$. Under such strategies, the probability of selecting an action profile depends strictly on the current state:
\[
P\left(\mathbf{z}^{(t)} \;\middle|\; \mathcal{G}^{(t)}\right) = P\left(\mathbf{z}^{(t)} \;\middle|\; G^{(t)}\right) = \prod_{i=1}^n \pi_i\left(z_i^{(t)} \;\middle|\; G^{(t)}\right)
\]

We evaluate the conditional distribution of the future state given only the history of states. By the Law of Total Probability, integrating (or summing) over the action space $\mathcal{A}$:
\[
P\left(G^{(t+1)} \;\middle|\; \mathcal{G}^{(t)}\right) = \int_{\mathcal{A}} P\left(G^{(t+1)} \;\middle|\; \mathcal{G}^{(t)}, \mathbf{z}^{(t)}\right) \cdot P\left(\mathbf{z}^{(t)} \;\middle|\; \mathcal{G}^{(t)}\right) d\mathbf{z}^{(t)}
\]
Using the conditional independence result from Step 1, $\mathbb{P}\left(G^{(t+1)} \;\middle|\; \mathcal{G}^{(t)}, \mathbf{z}^{(t)}\right) = \mathbb{P}\left(G^{(t+1)} \;\middle|\; \mathbf{z}^{(t)}\right)$. Substituting this and the Markov policy rule gives:
\begin{equation}\label{eq:chain_step}
P\left(G^{(t+1)} \;\middle|\; \mathcal{G}^{(t)}\right) = \int_{\mathcal{A}} P\left(G^{(t+1)} \;\middle|\; \mathbf{z}^{(t)}\right) \cdot P\left(\mathbf{z}^{(t)} \;\middle|\; G^{(t)}\right) d\mathbf{z}^{(t)}
\end{equation}

Now, let us examine the conditional probability of $G^{(t+1)}$ given \textit{only} the current state $G^{(t)}$:
\begin{align*}
P\left(G^{(t+1)} \;\middle|\; G^{(t)}\right) &= \int_{\mathcal{A}} P\left(G^{(t+1)} \;\middle|\; G^{(t)}, \mathbf{z}^{(t)}\right) \cdot P\left(\mathbf{z}^{(t)} \;\middle|\; G^{(t)}\right) d\mathbf{z}^{(t)} \\
&= \int_{\mathcal{A}} P\left(G^{(t+1)} \;\middle|\; \mathbf{z}^{(t)}\right) \cdot P\left(\mathbf{z}^{(t)} \;\middle|\; G^{(t)}\right) d\mathbf{z}^{(t)}
\end{align*}
Comparing this directly to Equation \eqref{eq:chain_step}, we observe that the integrals are mathematically identical. Hence:
\[
P\left(G^{(t+1)} \;\middle|\; G^{(0)}, G^{(1)}, \dots, G^{(t)}\right) = P\left(G^{(t+1)} \;\middle|\; G^{(t)}\right)
\]

\subsubsection*{Conclusion.}
The network topologies function as valid Markovian states. The continuous interaction loop operates as a classic Markov game because:
\begin{enumerate}
    \item The randomized transition mechanism to a new graph structure depends exclusively on the immediate configuration of expressed opinions.
    \item When closed under Markovian equilibrium policies, the structural transitions collapse into a pure Markov chain where historical topologies provide zero additional predictive power over the future evolution of the network.
\end{enumerate}

\subsection{Modeling the $K$-NN Markov Game} \label{sec:k-nn}

We formally define this coevolutionary setting as a \textbf{Markov Game} (or Stochastic Game) represented by the tuple:
\[
\mathcal{M} = \left\langle \mathcal{N}, \mathcal{S}, \{\mathcal{A}_i\}_{i \in \mathcal{N}}, \mathcal{P}, \{C_i\}_{i \in \mathcal{N}}, \gamma \right\rangle
\]

\subsubsection{Players ($\mathcal{N}$)}
The player set is $\mathcal{N} = \{1, 2, \dots, n\}$, corresponding to the $n$ nodes in the social network. Each player $i \in \mathcal{N}$ possesses an invariant, time-independent \textbf{intrinsic opinion} denoted by $s_i \in \mathbb{R}$.

\subsubsection{State Space ($\mathcal{S}$)}
The state space represents the finite set of valid \textbf{network topologies}. A state $G^{(t)} \in \mathcal{S}$ at time step $t$ is a directed graph profile characterized by the current neighborhood sets of all players:
\[
G^{(t)} = \left(S_1^{(t)}, S_2^{(t)}, \dots, S_n^{(t)}\right)
\]
Because each player must actively select exactly $K$ neighbors, every localized neighborhood set $S_i^{(t)} \subset \mathcal{N} \setminus \{i\}$ satisfies the strict cardinality constraint $|S_i^{(t)}| = K$.

\subsubsection{Action Space ($\mathcal{A}_i$)}
The action $a_i^{(t)}$ chosen by player $i$ at time step $t$ is their \textbf{expressed opinion}, denoted as $z_i^{(t)} \in \mathbb{R}$. The joint action profile of the population at time $t$ is given by the vector:
\[
\mathbf{z}^{(t)} = \left(z_1^{(t)}, z_2^{(t)}, \dots, z_n^{(t)}\right) \in \mathcal{A}
\]
where $\mathcal{A} = \prod_{i \in \mathcal{N}} \mathcal{A}_i$.

\subsubsection{Transition Probabilities ($\mathcal{P}$)}
Transitions characterize how the network topology alters state configurations from $t$ to $t+1$ based on the current action profile. Let $d_{ij}(z_j^{(t)}, s_i) = |z_j^{(t)} - s_i|$ represent the distance between player $j$'s expressed opinion and player $i$'s intrinsic opinion.

To model the randomized choice of ``roughly nearby'' neighbors, we apply an exponential soft-max weighting function:
\[
w_{ij}\left(z_j^{(t)}, s_i\right) = \exp\left(-\beta \cdot \left|z_j^{(t)} - s_i\right|\right)
\]
where $\beta > 0$ is the intensity of choice parameter (or inverse temperature) controlling the degree of randomness.

Since player $i$ samples a subset of $K$ distinct neighbors without replacement, the probability that player $i$ selects a specific neighborhood configuration $S_i^{(t+1)}$ is proportional to the product of its connection weights:
\[
P\left(S_i^{(t+1)} \;\middle|\; \mathbf{z}^{(t)}, s_i\right) = \frac{\prod_{j \in S_i^{(t+1)}} w_{ij}\left(z_j^{(t)}, s_i\right)}{\sum_{\substack{S' \subset \mathcal{N} \setminus \{i\} \\ |S'| = K}} \prod_{k \in S'} w_{ik}\left(z_k^{(t)}, s_i\right)}
\]
Assuming that players choose their neighbor portfolios conditionally independently given the current action profile, the global state transition probability function $\mathcal{P}: \mathcal{S} \times \mathcal{A} \rightarrow \Delta(\mathcal{S})$ factorizes as:
\[
P\left(G^{(t+1)} \;\middle|\; G^{(t)}, \mathbf{z}^{(t)}\right) = \prod_{i=1}^n P\left(S_i^{(t+1)} \;\middle|\; \mathbf{z}^{(t)}, s_i\right)
\]
\textit{Note:} In this specific class of games, the next topological state $G^{(t+1)}$ is conditionally independent of the historical state $G^{(t)}$ given the current opinions $\mathbf{z}^{(t)}$.

\subsubsection{Stage Cost Functions ($C_i$)}
The immediate stage cost $C_i$ sustained by player $i$ is calculated over the current state configuration ($G^{(t)}$) and the current joint action profile ($\mathbf{z}^{(t)}$). Adapting the continuous disagreement formulation from the base paper, the cost is defined as:
\[
C_i\left(G^{(t)}, \mathbf{z}^{(t)}\right) = \sum_{j \in S_i^{(t)}} \left(z_i^{(t)} - z_j^{(t)}\right)^2 + \rho K \left(z_i^{(t)} - s_i\right)^2
\]
where $S_i^{(t)}$ is the set of outgoing links designated to node $i$ by the current state graph $G^{(t)}$, and \(\rho > 0\) represents the internal penalty parameter weighting the player's aversion to drifting away from their true intrinsic opinion $s_i$.

\subsection{Optimistic Gradient Ascents in Two-Player General-Sum Markov Games} \label{sec:ogda}

Specifically in a two-player case, during iteration $t$, the algorithm defines game matrix $Q_{i,t}^s$ via $V_{i,t-1}^s$ for $i\in\{1,2\}$. The entries of $Q_{i,t}^s$ and $V_{i,t}^s$ can be formally expressed as follows: for all~$i,t,s$,
\begin{equation}
    Q_{i,t}^s(a, b) \triangleq r_i(s, a, b)+\gamma \mathbf{E}_{s^{\prime} \sim P(\cdot \mid s, a, b)}\left[V_{i,t-1}^{s^{\prime}}\right],
\end{equation}
\begin{equation}
    V_{i,t}^s=
    \sum_{a,b}Q_{i,t}^s(a, b).
\end{equation}

The OGA follows the updates below:
\begin{eqnarray}
    \widehat{\mathbf{x}}_{t+1}^s =\Pi_{\Delta_{\mathcal{X}}}\left\{\widehat{\mathbf{x}}_t^s+\eta \mathbf{u}_t^s\right\}, \qquad
    \mathbf{x}_{t+1}^s =\Pi_{\Delta_{\mathcal{X}}}\left\{\widehat{\mathbf{x}}_{t+1}^s+\eta \mathbf{u}_t^s\right\},\\ \nonumber
    \widehat{\mathbf{y}}_{t+1}^s =\Pi_{\Delta_{\mathcal{Y}}}\left\{\widehat{\mathbf{y}}_t^s+\eta \mathbf{r}_t^s\right\}, \qquad
    \mathbf{y}_{t+1}^s =\Pi_{\Delta_{\mathcal{Y}}}\left\{\widehat{\mathbf{y}}_{t+1}^s+\eta \mathbf{r}_t^s\right\},
\end{eqnarray}
where $\eta$ is a learning rate, and $\mathbf{u}_{t}^s$ is an estimator that represents the gradient at $\mathbf{x}_{t+1}^s$ and $\widehat{\mathbf{x}}_{t+1}^s$, and $\mathbf{r}_{t}^s$ is an estimator that represents the gradient at ${\mathbf{y}}_{t+1}^s$ and $\widehat{\mathbf{y}}_{t+1}^s$. Note that we adapt the estimation used in~\cite{wei2021last}~[Section~3.1] in zero-sum Markov games.
The estimator~$\mathbf{u}_{t}^s$ is defined as $\left\|\mathbf{u}^s_t-Q_{1,t}^s \mathbf{y}_{t}^s\right\| \leq \varepsilon$ to approximate $Q_{1,t}^s \mathbf{y}_{t}^s$. The estimator~$\mathbf{r}_{t}^s$ is defined as $\left\|\mathbf{r}^s_t-{\mathbf{x}_t^s}^\top Q_{2,t}^s \right\| \leq \varepsilon$ to approximate ${\mathbf{x}_t^s}^\top Q_{2,t}^s$. The estimation is detailed as follows.

\subsubsection*{Gradient Estimation.}
The algorithm uses these estimators as the (estimated) gradients instead of directly using $Q_{1,t}^s \mathbf{y}_t^s$ and ${\mathbf{x}_t^s}^\top Q_{2,t}^s$ because in \emph{decentralized} algorithms, the players can only observe their current state and reward after taking an action. Similarly, the updates for $\mathbf{y}_{t+1}^s$ and $\widehat{\mathbf{y}}_{t+1}^s$ are performed using the estimated gradient $\mathbf{r}_{t}^s$ instead of~$\mathbf{x}_t^s Q_{2,t}^s$.

In a decentralized algorithm, each agent can only access their current state and the reward received after taking an action. In this case, we compute the $\varepsilon$-approximations of $Q_{1,t}^s y_{t}^s$ and ${x_t^s}^\top Q_{2,t}^s$ as the gradients for $x^s$ and $y^s$, respectively. In particular, in each iteration $t$, both players participate in a sequence of $L$ steps, interacting with each other. In addition, they follow a policy that incorporates a \emph{uniformly random exploration}. 
\begin{eqnarray*}
&&u^{s}_t(a) =\frac{\sum_{i=1}^L \mathbf{1}\left[s_i=s, a_i=a\right]\left(r_1(s, a, b_i)+\gamma V_{1,t-1}^{s_{i+1}}\right)}{\sum_{i=1}^L \mathbf{1}\left[s_i=s, a_i=a\right]}\\
&&r^{s}_t(b) =\frac{\sum_{i=1}^L \mathbf{1}\left[s_i=s, b_i=b\right]\left(r_2(s, a_i, b)+\gamma V_{2, t-1}^{s_{i+1}}\right)}{\sum_{i=1}^L \mathbf{1}\left[s_i=s, b_i=b\right]}.\\
\end{eqnarray*}


\section{OGA for Approximate Nash Equilibria or Bounded POA in General-Sum Markov Games} \label{sec:oga}
We highlight the extra work and differences compared to the proof of \cite{anagnostides}[Theorem~A.17] on a high level as follows.
\begin{enumerate}
\item In (general-sum) Markov games, the regret of the optimistic gradient ascent algorithm would have extra positive terms (coming from Q-values) that need to be offset by part of the negative terms similar to those in the proof of \cite{anagnostides}[Theorem~A.12], from which the proof of \cite{anagnostides}[Theorem~A.17] follows. 

\item In our case, this results in assumption~(iii) required in \cite{anagnostides}[Theorem~A.17] unsatisfied since an appropriate range of $\eta$ (i.e., $\eta\leq O(\frac{1}{\sqrt{T}})$) has to be different from $\eta\leq\frac{C}{4C^*}$ in assumption~(iii) for two-player games, where $\|\cdot\|\geq C\|\cdot\|_1$ and $\|\cdot\|_*\leq C^*\|\cdot\|_\infty$ for some constants~$C$ and $C^*$. 

\item Furthermore, the threshold on $T$ for convergence or the bounded price of anarchy in assumption~(iii) of \cite{anagnostides}[Theorem~A.17] may not be satisfied or appropriate since that on $T$ is now different in our case (and is in relation to the number of states, which is not a parameter in the threshold on $T$ in assumption~(iii) of \cite{anagnostides}[Theorem~A.17]).
\end{enumerate}

We need some additional notions about Markov games. In a multi-player Markov game, there are ways to measure the convergence performance in terms of convergence rates. For any stationary policy profile $\mathbf{z}^s = (x^s_i)_{i}\in\mathcal{Z}^s=(\mathcal{X}_i^s)_i$ for all states~$s$, $dist(\mathbf{z}^s,\mathcal{Z}^s_*)$ is the distance from $\mathbf{z}^s$ to the set of
equilibrium policy profiles $\mathcal{Z}^s_*$ for $s$, which can be formulated as $\|\mathbf{z}^s - \Pi_{\mathcal{Z}^s_*}(\mathbf{z}^s)\|$, and we denote by $\Pi_{\mathcal{Z}^s_*}(\mathbf{z}) := \arg\min_{\mathbf{z}\in\mathcal{Z}^s_*} dist(\mathbf{z},\mathcal{Z}^s_*)$ the projection of $\mathbf{z}$ onto $\mathcal{Z}^s_*$. In particular, in the two-player case, we denote any stationary policy profile by $\mathbf{z}^s\in\mathcal{Z}^s = \mathcal{X}^s\times\mathcal{Y}^s$ for any $s$.

\begin{theorem} \label{thm:convergence}
Suppose that each player~$i$ employs OGA with (i) pair of norms $(\|\cdot\|, \|\cdot\|_*)$ such that $\|\cdot\|\geq C\|\cdot~\|_1$ and $\|\cdot\|\leq C^*\|\cdot~\|_\infty$; (ii)
$G_i$-smooth regularizer~$\mathcal{R}_i$; and (iii) learning rate $\eta\leq O(1/\sqrt{T})$. Moreover, suppose that the game is
such that $\sum_i R_i^T\geq 0$ for any $T$. Then, for any $\epsilon > 0$, after $T>\frac{2}{\epsilon^2 |\mathcal{S}|}\sum_{s_t\in\mathcal{S}}\sum_i\Omega_i^s$ iterations, there exists an iterate~$\mathbf{z}_t^s$ with $t\in [T]$ such that $\mathbf{z}^{s_t}_t$ is an
\[\epsilon\Bigl(C^*\max\{\|Q^{s_t}_{x,t}\mathbf{y}^{s_t}_t\|_\infty,\|\mathbf{x}^{s_t}_t Q^{s_t}_{y,t}\|_\infty\}+2\frac{\max\{G_1\Omega^s_1,G_2\Omega^s_2\}}{\eta}\Bigr)\] 
-approximate Nash equilibrium joint policy, where \\$\Omega^s_i:=\sup_{\mathbf{x},\mathbf{x}'\in\mathcal{X}^s_i}\|\mathbf{x}-~\mathbf{x}'\|$. 
Otherwise, \[
\frac{1}{T} \sum^T_{t=1} V_{\pi_t}(s) \geq \frac{\alpha}{1+\beta}V_{\pi^*}(s) + \frac{1}{1+\beta}\frac{\epsilon^2}{8 \eta}.
\] 
\end{theorem}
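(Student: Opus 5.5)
The plan is to follow the template of \cite{anagnostides}[Theorem~A.12 and A.17], applied state by state to the stage games with payoff matrices $Q^s_{i,t}$. The only new ingredient is handling the Q-value drift. The first step is a per-state RVU-type regret bound for OGA with the $G_i$-smooth regularizer $\mathcal{R}_i$. For each state $s$ and player $i$, with utilities $\mathbf{u}^s_t \approx Q^s_{1,t}\mathbf{y}^s_t$ (resp.\ $\mathbf{r}^s_t$), we aim for
\[
\mathrm{Reg}^{s}_{i,T}\le \frac{\Omega^s_i}{\eta}+\eta\sum_t\|\mathbf{u}^s_t-\mathbf{u}^s_{t-1}\|_*^2-\frac{1}{8\eta}\sum_t\bigl(\|\mathbf{x}^s_t-\widehat{\mathbf{x}}^s_t\|^2+\|\mathbf{x}^s_t-\widehat{\mathbf{x}}^s_{t+1}\|^2\bigr).
\]
In the Markov setting, $\mathbf{u}^s_t-\mathbf{u}^s_{t-1}$ splits into two parts. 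One is $Q^s_{1,t}(\mathbf{y}^s_t-\mathbf{y}^s_{t-1})$, which by (i) is bounded by a constant times $C^*\|\mathbf{y}^s_t-\mathbf{y}^s_{t-1}\|$. The other is $(Q^s_{1,t}-Q^s_{1,t-1})\mathbf{y}^s_{t-1}$, which is at most $\gamma\max_{s'}|V^{s'}_{1,t-1}-V^{s'}_{1,t-2}|$, plus the estimation error $\varepsilon$. This second part is the new positive term.

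The second step is to absorb these extra terms. The first part is cancelled by the other player's negative movement terms, exactly as in \cite{anagnostides}[Theorem~A.12], provided $\eta$ is small. For the Q-drift part, I would bound the change of $V^{s}_{i,t}$ by the movement of the policies at all states, using the contraction $\gamma<1$. Unrolling the recursion then gives a sum of movement terms weighted by $\gamma^k$, so the drift is controlled by $\frac{1}{1-\gamma}\max_s$ (movement at $s$). Choosing $\eta\le O(1/\sqrt T)$ makes $\eta\sum_t(\text{drift})^2=O(1)$, so it can be charged against a half of the negative terms.

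The third step sums over players. The hypothesis $\sum_iR^T_i\ge 0$ then yields
\[
\sum_t\sum_i\bigl(\|\mathbf{x}^s_{i,t}-\widehat{\mathbf{x}}^s_{i,t}\|^2+\|\mathbf{x}^s_{i,t}-\widehat{\mathbf{x}}^s_{i,t+1}\|^2\bigr)\le 2\sum_i\Omega^s_i,
\]
up to constants. Averaging over states by a factor $1/|\mathcal S|$ and applying pigeonhole gives some $t$ and $s_t$ at which the total movement is below $\epsilon^2$, once $T>\frac{2}{\epsilon^2|\mathcal S|}\sum_s\sum_i\Omega^s_i$. At that iterate, the first-order optimality condition of the OGA projection step shows that every deviation $\mathbf{x}'$ improves $\langle Q^{s_t}_{x,t}\mathbf{y}^{s_t}_t,\mathbf{x}'-\mathbf{x}^{s_t}_t\rangle$ by at most $\epsilon C^*\|Q\mathbf{y}\|_\infty+2\epsilon G_i\Omega_i/\eta$, via smoothness of $\mathcal R_i$ and Hölder. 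The stated approximate Nash equilibrium bound then follows. Finally, we pass from the stage game to the policy Nash-gap through the performance-difference argument used in \cite{wei2021last}.

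For the "otherwise" branch, if no such iterate exists, then every iterate has movement at least $\epsilon^2$. The negative terms then contribute at least $T\epsilon^2/(8\eta)$ to $-\sum_iR_i$. Combining this with the $(\alpha,\beta)$-smoothness inequality gives
\[
\sum_tV_{\pi_t}\ge\alpha V_{\pi^*}T-\beta\sum_tV_{\pi_t}+T\epsilon^2/(8\eta),
\]
and rearranging yields the bound. I expect the main obstacle to be the second step: the Q-drift couples all states through $V_{t-1}$, so the negative terms at a single state may not dominate. That is why I would work with state-averaged quantities, which also explains the $|\mathcal S|$ in the threshold.
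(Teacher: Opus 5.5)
Up to the guarantee at a single pair $(t,s_t)$, your outline is the paper's proof. The shared steps are:
\begin{itemize}
\item a per-state OGA regret bound;
\item splitting $\mathbf{u}^s_t-\mathbf{u}^s_{t-1}$ into opponent movement and Q-drift;
\item bounding the Q-drift by past policy movement at all states (the paper cites the Appendix~D lemma of \cite{wei2021last}; your unrolling of the critic recursion derives the same kind of estimate directly);
\item absorbing it into the negative terms summed over states;
\item using $\sum_i R_i^T\ge 0$ to bound total movement;
\item pigeonholing over the $T|\mathcal S|$ pairs $(t,s)$;
\item the first-order-optimality/H\"older argument at that pair;
\item the smoothness rearrangement in the otherwise branch.
\end{itemize}
Your geometric unrolling is in fact sharper than the paper's estimate. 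Summing weights $\gamma^{t-\tau}$ over $t$ gives $O(1/(1-\gamma))$, where the paper has the crude factor $T\max_{t,\tau}\beta^\tau_t$. So the drift can be charged to the negative terms with an $\eta$ that depends on $\gamma,|\mathcal S|,C,C^*$ but not on $T$. What does that charging is this smallness condition on $\eta$, not your remark that $\eta\sum_t(\text{drift})^2=O(1)$. An additive $O(1)$ term would land in the $\Omega^s_i/\eta$ budget, not in the negative terms.

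The step that would fail is your last one: converting the stage-game guarantee into a policy Nash-gap by performance difference. That lemma writes $V^i_{\pi'_i,\pi_{t,-i}}(\mu)-V^i_{\pi_t}(\mu)$ as $\frac{1}{1-\gamma}$ times an expectation, over the occupancy measure of the deviating policy. The integrand is the one-step advantage computed with the true $Q^{\pi_t}_i$, at \emph{every} visited state. This creates two problems.
\begin{itemize}
\item Your pigeonhole certifies only the single state $s_t$. Certifying all states at a common $t$ would need $T>\frac{2}{\epsilon^2}\sum_s\sum_i\Omega^s_i$, i.e., the threshold without the $1/|\mathcal S|$.
\item Your guarantee is with respect to $Q^{s_t}_{i,t}$, which is built from the critic $V_{i,t-1}$. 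Nothing in the argument bounds $\|V_{i,t-1}-V^{\pi_t}_i\|_\infty$ in a general-sum game. In \cite{wei2021last} this comes from the zero-sum structure, via convergence of the critic to the minimax value.
\end{itemize}

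The paper's proof does not take this step. Its Claim stops at the stage-game statement at $s_t$ with respect to $Q^{s_t}_{\cdot,t}$, which is exactly what the stated bound expresses. To reproduce the paper's argument, drop the performance-difference sentence and read ``approximate Nash equilibrium'' in that stage-game sense. A genuine policy-level Nash-gap would need both an all-state pigeonhole and a critic-accuracy argument, and neither you nor the paper supplies them.

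The same stage-game-versus-policy conflation sits in your otherwise branch. The smoothness inequality needs the value-based Markov regret, while the RVU bound controls the sum over states of stage-game regrets. There, however, you follow the paper exactly.
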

\begin{remark}
Although this result is stated for two players, it can be readily extended to multiple players. 
\end{remark}
\begin{proof}
By the standard proof of OGDA (see, e.g., the proof of Lemma~1 in \cite{wei2021linear} or \cite[Lemma~1]{rakhlin:sridharan}), for each~$t$ we have 
\begin{eqnarray*}
(\mathbf{x}^s_*-\mathbf{x}^s_t)^T \mathbf{u}^s_t&\leq&\frac{1}{2\eta}(\|\hat{\mathbf{x}}^s_t-\mathbf{x}^s_*\|^2-\|\hat{\mathbf{x}}^s_{t+1}-\mathbf{x}^s_*\|^2\\
&&-\|\hat{\mathbf{x}}^s_{t+1}-\mathbf{x}^s_t\|^2-\|\mathbf{x}^s_t-\hat{\mathbf{x}}^s_t\|^2)\\
&&+\eta\|\mathbf{u}^s_t-\mathbf{u}^s_{t-1}\|^2
\end{eqnarray*}
for any $s$. Thus, for each~$t$ 
\begin{eqnarray*}
\sum_s(\mathbf{x}^s_*-\mathbf{x}^s_t)^T \mathbf{u}^s_t&\leq&\sum_s(\frac{1}{2\eta}(\|\hat{\mathbf{x}}^s_t-\mathbf{x}^s_*\|^2-\|\hat{\mathbf{x}}^s_{t+1}-\mathbf{x}^s_*\|^2\\
&&-\|\hat{\mathbf{x}}^s_{t+1}-\mathbf{x}^s_t\|^2-\|\mathbf{x}^s_t-\hat{\mathbf{x}}^s_t\|^2)\\
&&+\eta\|\mathbf{u}^s_t-\mathbf{u}^s_{t-1}\|^2)
\end{eqnarray*}
Summing over $t$, by telescoping
\begin{eqnarray*}
\sum_t\sum_s(\mathbf{x}^{s}_*-\mathbf{x}^{s}_t)^{\top} \mathbf{u}^{s}_t&\leq&\frac{1}{2\eta}(\sum_s(\|\hat{\mathbf{x}}^{s}_1-\mathbf{x}^{s}_*\|^2-
\|\hat{\mathbf{x}}^{s}_{T+1}-\mathbf{x}^{s}_*\|^2)\\
&&-\sum_t\sum_s\|\hat{\mathbf{x}}^{s}_{t+1}-\mathbf{x}^{s}_t\|^2-\sum_t\sum_s\|\mathbf{x}^{s}_t-\hat{\mathbf{x}}^{s}_t\|^2)\\
&&+\eta\sum_t\sum_s\|\mathbf{u}^{s}_t-\mathbf{u}^{s}_{t-1}\|^2.
\end{eqnarray*}
By the definition of $\mathbf{u}^s_t$, we have
\[\eta\|\mathbf{u}^s_t-\mathbf{u}^s_{t-1}\|^2\leq 4\eta\|Q^s_{\mathbf{x},t}-Q^s_{\mathbf{x},t-1}\|^2+\frac{4\eta}{(1-\gamma)^2}\|\mathbf{y}^s_t-\mathbf{y}^s_{t-1}\|^2.\]
for any $s$.
Similarly, 
\begin{eqnarray*}
&&\sum_t\sum_s(\mathbf{y}^{s}_*-\mathbf{y}^{s}_t)^\top \mathbf{r}^{s}_t\\
&\leq&\frac{1}{2\eta}(\sum_s(\|\hat{\mathbf{y}}^{s}_1-\mathbf{y}^{s}_*\|^2-
\|\hat{\mathbf{y}}^{s}_{T+1}-\mathbf{y}^{s}_*\|^2) \\
&&-\sum_t\sum_s\|\hat{\mathbf{y}}^{s}_{t+1}-\mathbf{y}^{s}_t\|^2-\sum_t\sum_s\|\mathbf{y}^{s}_t-\hat{\mathbf{y}}^{s}_t\|^2)+\eta\sum_t\|\mathbf{r}^{s}_t-\mathbf{r}^{s}_{t-1}\|^2,
\end{eqnarray*} 
and
\begin{eqnarray*}
\eta\|\mathbf{r}^s_t-\mathbf{r}^s_{t-1}\|^2\leq4\eta\|Q^s_{\mathbf{y},t}-Q^s_{\mathbf{y},t-1}\|^2+\frac{4\eta}{(1-\gamma)^2}\|\mathbf{x}^s_t-\mathbf{x}^s_{t-1}\|^2
\end{eqnarray*}
for any $s$.

\subsection{Q-values} \label{sec:q}
To deal with extra positive terms that come from Q-values, we then borrow some technical auxiliary lemma that relates O-value differences to policy differences.
\begin{lemma}[Appendix~D of \cite{wei2021last}]
For any $s$,
\[\|Q_{\mathbf{x},t}^s-Q_{\mathbf{x},t-1}^s\|^2\leq\frac{8\gamma^2}{(1-\gamma)^3}\max_{s'}\sum_{\tau=1}^{t-1}\beta^\tau_t\|\mathbf{y}^{s'}_\tau-\mathbf{y}^{s'}_{\tau-1}\|^2\]
where $\gamma$ is defined in \cite[Appendix~D]{wei2021last}. 
\end{lemma}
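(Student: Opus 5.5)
The statement to prove is the Lemma from Appendix~D of \cite{wei2021last}, which bounds $\|Q^s_{\mathbf{x},t}-Q^s_{\mathbf{x},t-1}\|^2$ by a weighted sum of past policy movements of the opponent. The plan is to relate successive $Q$-matrices to successive value estimates, and then unroll the value recursion. By the definition of $Q^s_{i,t}$, for every entry $(a,b)$ we have
\[
Q^s_{\mathbf{x},t}(a,b)-Q^s_{\mathbf{x},t-1}(a,b)=\gamma\,\mathbf{E}_{s'\sim P(\cdot\mid s,a,b)}\bigl[V^{s'}_{t-1}-V^{s'}_{t-2}\bigr].
\]
Hence $\|Q^s_{\mathbf{x},t}-Q^s_{\mathbf{x},t-1}\|^2\le\gamma^2\max_{s'}|V^{s'}_{t-1}-V^{s'}_{t-2}|^2$, after absorbing the matrix-norm conversion into constants. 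It therefore suffices to control the value differences $\delta_{t}:=\max_{s'}|V^{s'}_{t}-V^{s'}_{t-1}|$.

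Next I would write the value update in the smoothed form used by \cite{wei2021last}: $V^s_t=(1-\alpha_t)V^s_{t-1}+\alpha_t\,\mathbf{x}^{s\top}_t Q^s_t\mathbf{y}^s_t$, with step weights $\alpha_t$. Here I read the $\sum_{a,b}$ in the paper's definition as the bilinear form; this reading is the one that makes the lemma hold. Subtracting consecutive updates gives a recursion of the form
\[
V^s_t-V^s_{t-1}=(1-\alpha_t)(V^s_{t-1}-V^s_{t-2})+\text{(terms in }\alpha_t-\alpha_{t-1})+\alpha_t\bigl(\mathbf{x}^{s\top}_tQ^s_t\mathbf{y}^s_t-\mathbf{x}^{s\top}_{t-1}Q^s_{t-1}\mathbf{y}^s_{t-1}\bigr).
\]
I would split the last bracket into three parts: $\mathbf{x}^{s\top}_t(Q^s_t-Q^s_{t-1})\mathbf{y}^s_t$, which is at most $\gamma\delta_{t-1}$ by the first step; $(\mathbf{x}_t-\mathbf{x}_{t-1})^\top Q\,\mathbf{y}$, which for the $\mathbf{x}$-player's own value can be handled by optimality/telescoping, or else kept; and $\mathbf{x}^\top Q(\mathbf{y}_t-\mathbf{y}_{t-1})$, which is at most $\frac{1}{1-\gamma}\|\mathbf{y}^s_t-\mathbf{y}^s_{t-1}\|_1$ because entries of $Q$ are bounded by $1/(1-\gamma)$.

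The resulting linear recursion, roughly $\delta_t\le(1-\alpha_t+\alpha_t\gamma)\delta_{t-1}+\frac{\alpha_t}{1-\gamma}\max_{s'}\|\mathbf{y}^{s'}_t-\mathbf{y}^{s'}_{t-1}\|$, is then unrolled. The contraction factors multiply into weights $\beta^\tau_t=\alpha_\tau\prod_{k=\tau+1}^{t}(1-\alpha_k(1-\gamma))$ (this is the weighting in \cite{wei2021last}), and these satisfy $\sum_\tau\beta^\tau_t\le\frac{1}{1-\gamma}$.

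Squaring and applying Cauchy--Schwarz with respect to the weights $\beta^\tau_t$ gives
\[
\delta^2_{t}\le\Bigl(\sum_\tau\beta^\tau_t\Bigr)\sum_\tau\beta^\tau_t\frac{c}{(1-\gamma)^2}\|\mathbf{y}_\tau-\mathbf{y}_{\tau-1}\|^2 .
\]
Combining this with the first step yields the claimed bound $\frac{8\gamma^2}{(1-\gamma)^3}\max_{s'}\sum_\tau\beta^\tau_t\|\mathbf{y}^{s'}_\tau-\mathbf{y}^{s'}_{\tau-1}\|^2$, where the constant $8$ comes from $(a+b)^2\le2a^2+2b^2$ applied twice. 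The $\max_{s'}$ is pulled outside the sum, which only loosens the bound.

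The main obstacle is the $\mathbf{x}$-movement term and the drift terms from changing $\alpha_t$, since the lemma bounds the $\mathbf{x}$-player's $Q$ change only through $\mathbf{y}$ movement. I would first try to handle the own-policy term either by including it symmetrically (giving $\|\mathbf{z}_\tau-\mathbf{z}_{\tau-1}\|^2$, which suffices for the theorem's use) or by invoking the envelope-type argument of \cite{wei2021last}. I would also ensure $\alpha_t$ is slowly varying so the drift terms are dominated.
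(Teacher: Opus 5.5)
\textbf{There is a genuine gap.} Your recursion cannot close to the stated inequality. The obstacles you defer to the end are not technicalities; they are exactly where the statement, as written, breaks.

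The right-hand side vanishes whenever the opponent never moves. Take a game in which player~2 has a single action in every state, so that $\mathbf{y}^{s'}_\tau=\mathbf{y}^{s'}_{\tau-1}$ for all $\tau,s'$. With $V_{1,0}\equiv 0$ and positive rewards, the entries of $Q^s_{\mathbf{x},2}-Q^s_{\mathbf{x},1}$ equal $\gamma\,\mathbf{E}_{s'}[V^{s'}_{1,1}]>0$. This happens under the paper's literal $\sum_{a,b}$ definition, where $V$ does not depend on the policies at all, so no movement-based bound is possible. It happens equally under your smoothed bilinear reading, where $V^{s'}_{1,1}=\alpha_1\mathbf{x}^{s'\top}_1Q^{s'}_{1,1}\mathbf{y}^{s'}_1$. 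So that reading does not ``make the lemma hold''.

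Write $\rho^s_t=\mathbf{x}^{s\top}_tQ^s_{1,t}\mathbf{y}^s_t$. In your recursion the failure shows up in three concrete places.
\begin{enumerate}
\item The own-movement piece $\alpha_t(\mathbf{x}_t-\mathbf{x}_{t-1})^\top Q\,\mathbf{y}$ cannot be removed. $V_{1,t}$ tracks the value of the current OGA iterate, not a maximum over $\mathbf{x}$, so there is no envelope or optimality argument and nothing for it to telescope against.
\item The base case $\delta_1=\alpha_1\max_s|\rho^s_1|$, created by the initialization, survives the unrolling as $\prod_{k=2}^{t}(1-\alpha_k(1-\gamma))\,\delta_1$.
\item The drift term $(\alpha_{t-1}-\alpha_t)(V^s_{1,t-2}-\rho^s_{t-1})$ is unrelated to policy movement however slowly $\alpha_t$ varies. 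It cannot be ``dominated'' by a right-hand side that can be zero.
\end{enumerate}
Replacing $\mathbf{y}$ by $\mathbf{z}$ does not rescue the homogeneous form either. If both players have a single action, every movement term vanishes while $V_{1,t}$ still changes.

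There is also a direct error in your last step. The unrolling gives $\sum_\tau\beta^\tau_t\max_{s'}\|\mathbf{y}^{s'}_\tau-\mathbf{y}^{s'}_{\tau-1}\|^2$. Since $\max_{s'}\sum_\tau(\cdot)\le\sum_\tau\max_{s'}(\cdot)$, moving the maximum outside the sum strengthens the bound rather than loosening it. It is unjustified whenever the maximizing state changes with $\tau$.

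What your approach honestly delivers is a bound of the form $\frac{c\,\gamma^2}{(1-\gamma)^3}\sum_\tau\beta^\tau_{t-1}\max_{s'}\|\mathbf{z}^{s'}_\tau-\mathbf{z}^{s'}_{\tau-1}\|^2$, plus movement-independent terms from the initialization and the step-size drift. The paper offers nothing beyond the citation to Appendix~D of \cite{wei2021last}, so there is no in-paper argument that avoids these terms. Any downstream use of the lemma would have to be redone with them included. In particular, this affects the step that absorbs the $Q$-difference terms into the negative $\mathbf{y}$-movement terms when choosing $\eta$.
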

Thus, for any $s$
\begin{eqnarray*}
&&\sum_{t=1}^T\|Q_{\mathbf{x},t}^{s}-Q_{\mathbf{x},t-1}^{s}\|^2\\
&\leq&\frac{8\gamma^2}{(1-\gamma)^3}\sum_{t=1}^T\max_{s'}\sum_{\tau=1}^{t-1}\beta^\tau_t\|\mathbf{y}^{s'}_\tau-\mathbf{y}^{s'}_{\tau-1}\|^2\\
&\leq&\frac{8\gamma^2}{(1-\gamma)^3}\max_{s'}\sum_{t=1}^T\sum_{\tau=1}^{t-1}\beta^\tau_t\|\mathbf{y}^{s'}_\tau-\mathbf{y}^{s'}_{\tau-1}\|^2\\
&=&\frac{8\gamma^2}{(1-\gamma)^3}\max_{s'}((T-1)\max_{t\in\{2,...,T\}}\beta^1_t\|\mathbf{y}_1^{s'}-\mathbf{y_0}^{s'}\|+...+\max_{t\in\{2,...,T\}}\beta^{T-1}_t\|\mathbf{y}_{T-1}^{s'}-\mathbf{y}_{T-2}^{s'}\|^2)\\
&\leq&\frac{8\gamma^2 T\max_{t\in\{2,...,T\},\tau(t)\in\{1,...,t-1\}}\beta^{\tau(t)}_t}{(1-\gamma)^3}\cdot\max_{s'}\sum_{t=1}^{T-1}\|\mathbf{y}_t^{s'}-\mathbf{y}_{t-1}^{s'}\|^2.
\end{eqnarray*}

We are ready to bound the regrets. Let $\mathbf{z}=(\mathbf{x},\mathbf{y})$. Summing over the regrets of two players, 
\begin{eqnarray*}
&&R_1^T+R_2^T\\
&\leq&\frac{1}{2\eta}(\sum_s\|\hat{\mathbf{z}}^{s}_1-\mathbf{z}^{s}_*\|^2-\sum_t\sum_s\|\hat{\mathbf{z}}^{s}_{t+1}-\mathbf{z}^{s}_t\|^2-\sum_t\sum_s\|\mathbf{z}^{s}_t-\hat{\mathbf{z}}^{s}_t\|^2)\\
&&+4\eta\sum_t\sum_s(\|Q^{s}_{x,t}-Q^{s}_{x,t-1}\|^2+\|Q^{s}_{y,t}-Q^{s}_{y,t-1}\|^2\\
&&+\frac{1}{(1-\gamma)^2}\|\mathbf{z}^{s}_t-\mathbf{z}^{s}_{t-1}\|^2)
\end{eqnarray*}
for any $s$. Thus,

\begin{eqnarray*}
&&R_1^T+R_2^T\\
&\leq&\frac{1}{2\eta}\sum_s\|\hat{\mathbf{z}}^s_1-\mathbf{z}^s_*\|^2\\
&&-\frac{1}{4\eta|\mathcal{S}|}(\sum_t\sum_s\|\hat{\mathbf{z}}^{s_t}_{t+1}-\mathbf{z}^{s_t}_t\|^2+\sum_t\sum_s\|\mathbf{z}^{s_t}_t-\hat{\mathbf{z}}^{s_t}_t\|^2)\\
&&-\frac{1}{4\eta}\min_{s'}(\sum_t\|\hat{\mathbf{x}}^{s'}_{t+1}-\mathbf{x}^{s'}_t\|^2+\sum_t\|\mathbf{x}^{s'}_t-\hat{\mathbf{x}}^{s'}_t\|^2)\\
&&-\frac{1}{4\eta}\min_{s'}(\sum_t\|\hat{\mathbf{y}}^{s'}_{t+1}-\mathbf{y}^{s'}_t\|^2+\sum_t\|\mathbf{y}^{s'}_t-\hat{\mathbf{y}}^{s'}_t\|^2)\\
&&+\frac{4\eta}{|\mathcal{S}|}\sum_t\sum_s(\|Q^{s_t}_{x,t}-Q^{s_t}_{x,t-1}\|^2+\|Q^{s_t}_{y,t}-Q^{s_t}_{y,t-1}\|^2) \\
&&+\frac{4\eta}{(1-\gamma)^2}\max_{s'}\sum_t\|\mathbf{x}^{s'}_t-\mathbf{x}^{s'}_{t-1}\|^2\\
&&+\frac{4\eta}{(1-\gamma)^2}\max_{s'}\sum_t\|\mathbf{y}^{s'}_t-\mathbf{y}^{s'}_{t-1}\|^2,
\end{eqnarray*}
where the inequality holds by Young's inequality and the triangle inequality,
\[\|\mathbf{z}^s_{t+1}-\mathbf{z}^s_{t}\|^2\leq 2\|\mathbf{z}^s_{t+1}-\hat{\mathbf{z}}^s_{t}\|^2+2\|\mathbf{z}^s_{t}-\hat{\mathbf{z}}^s_{t}\|^2,\] and summing over all $t \in [T]$,
\begin{eqnarray*}
&&\sum_{t=1}^T\|\mathbf{z}^s_{t+1}-\mathbf{z}^s_{t}\|^2\\
& \leq & 2\sum_t\|\mathbf{z}^s_{t+1}-\hat{\mathbf{z}}^s_{t}\|^2+2\sum_t\|\mathbf{z}^s_{t} - \hat{\mathbf{z}}^s_{t}\|^2\leq 2\sum_t\|\hat{\mathbf{z}}^s_{t+1}-\mathbf{z}^s_t\|^2+2\sum_t\|\mathbf{z}^s_t-\hat{\mathbf{z}}^s_t\|^2.
\end{eqnarray*}

Using this, we can find a proper range for $\eta$ from player~$x$'s perspective,
\begin{eqnarray*}
&&\Bigl(-\frac{1}{8\eta}+\frac{32\eta\gamma^2 T\max_{t\in\{2,...,T\},\tau\in\{1,...,T-1\}}\beta^\tau_t}{(1-\gamma)^3}+\frac{4\eta}{(1-\gamma)^2}\Bigr)\cdot\max_{s'}\sum_{t=1}^T \|\mathbf{y}_{t+1}^{s'}-\mathbf{y}_{t}^{s'}\|^2\leq 0,
\end{eqnarray*}
and similarly for player~$y$ so for $\eta\leq O(1/\sqrt{T})$, 
\begin{eqnarray*}
R_1^T+R_2^T&\leq&\frac{1}{2\eta}\sum_s\|\hat{\mathbf{z}}^s_1-\mathbf{z}^s_*\|^2-\frac{1}{4\eta|\mathcal{S}|}\Bigl(\sum_t\sum_s\|\hat{\mathbf{z}}^s_{t+1}-\mathbf{z}^s_t\|^2+\sum_t\sum_s\|\mathbf{z}^s_t-\hat{\mathbf{z}}^s_t\|^2\Bigr).
\end{eqnarray*}

\subsection{Approximate Nash Equilibria or Bounded POA}
For $\eta\leq O(1/\sqrt{T})$, we have
\begin{eqnarray*}
0&\leq&\frac{1}{2\eta|\mathcal{S}|}\sum_s\|\hat{\mathbf{z}}^s_1-\mathbf{z}^s_*\|^2-\frac{1}{4\eta|\mathcal{S}|}(\sum_t\sum_s\|\hat{\mathbf{z}}^s_{t+1}-\mathbf{z}^s_t\|^2+\sum_t\sum_s\|\mathbf{z}^s_t-\hat{\mathbf{z}}^s_t\|^2),
\end{eqnarray*}
implying that
\begin{eqnarray} \label{eq:ineq} \nonumber
&&(\sum_t\sum_s\|\hat{\mathbf{z}}^s_{t+1}-\mathbf{z}^s_t\|^2+\sum_t\sum_s\|\mathbf{z}^s_t-\hat{\mathbf{z}}^s_t\|^2)
\leq\sum_s\|\hat{\mathbf{z}}^s_1-\mathbf{z}^s_*\|^2.
\end{eqnarray}

We adapt and extend the analysis idea in~\cite[Theorem~A.12 and Claim~A.14]{anagnostides} under MDP.
\begin{itemize}
\item Assume that for all $t\in[T]$, it holds that if there exists $t\in[T]$, $\|\hat{\mathbf{z}}^s_{t+1}-\mathbf{z}^s_t\|^2+\|\mathbf{z}^s_t-\hat{\mathbf{z}}^s_t\|^2>\epsilon^2$. In this case, it follows from~(\ref{eq:ineq}) that
\begin{eqnarray*}
\epsilon^2 T|\mathcal{S}|\leq 2\sum_s\|\hat{\mathbf{z}}^s_1-\mathbf{z}^s_*\|^2\implies 
T\leq\frac{2}{\epsilon^2 |\mathcal{S}|}\sum_s\|\hat{\mathbf{z}}^s_1-\mathbf{z}^s_*\|^2.
\end{eqnarray*}
Thus, for $T>2\sum_s\|\hat{\mathbf{z}}^s_1-\mathbf{z}^s_*\|^2/(\epsilon^2 |\mathcal{S}|)$ it must be the case that there exists $t\in [T]$ such that  
\begin{eqnarray} \label{eq:singlestep}
\|\hat{\mathbf{z}}^s_{t+1}-\mathbf{z}^s_t\|^2+\|\mathbf{z}^s_t-\hat{\mathbf{z}}^s_t\|^2\leq\epsilon^2.
\end{eqnarray}

This implies that 
\begin{enumerate}
    \item $\|\mathbf{x}^s_t-\hat{\mathbf{x}}^s_t\|\leq\epsilon$ or $\|\mathbf{y}^s_t-\hat{\mathbf{y}}^s_t\|\leq\epsilon$; 
    \item $\|\hat{\mathbf{z}}^s_{t+1}-\hat{\mathbf{z}}^s_t\|^2\leq 2\|\hat{\mathbf{z}}^s_{t+1}-\mathbf{z}^s_t\|^2+2\|\mathbf{z}^s_t-\hat{\mathbf{z}}^s_t\|^2 \implies \|\hat{\mathbf{z}}^s_{t+1}-\hat{\mathbf{z}}^s_t\|\leq 2\epsilon$ by the update rule of OGA ~(\ref{eq:singlestep}).
\end{enumerate}

Finally, we derive the following result.
\begin{claim} \label{thm:result}
Then, it follows that $\mathbf{z}^s_t$ is an
\begin{eqnarray*}
\epsilon\Bigl(C^*\max\{\|Q^s_{x,t}\mathbf{y}^s_t\|_\infty,\|\mathbf{x}^s_t Q^s_{y,t}\|_\infty\}+2\frac{\max\{G_1\Omega^s_1,G_2\Omega^s_2\}}{\eta}\Bigr)
\end{eqnarray*}
approximate Nash equilibrium policies, where $\|\cdot\|\leq C^*\|\cdot~\|_\infty$, 
regularizer~$\mathcal{R}_i$ is $G_i$-smooth and $\Omega^s_i:=\\\sup_{\mathbf{x},\mathbf{x}'\in\mathcal{X}_i^s}\|\mathbf{x}-~\mathbf{x}'\|$.
\end{claim}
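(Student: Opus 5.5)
We turn the near-stationarity from (\ref{eq:singlestep}) into a bound on each player's one-step best-response gain at state $s$, as in \cite[Claim~A.14]{anagnostides}. The tool is the variational (first-order optimality) characterization of the OGA update, read as a regularized (mirror-descent-type) step with regularizer $\mathcal{R}_i$. For player $x$, $\hat{\mathbf{x}}^s_{t+1}$ maximizes $\eta\langle \mathbf{u}^s_t,\mathbf{x}\rangle - D_{\mathcal{R}_1}(\mathbf{x},\hat{\mathbf{x}}^s_t)$ over $\mathcal{X}^s$. Hence, for every deviation $\mathbf{x}\in\mathcal{X}^s$,
\[
\eta\langle \mathbf{u}^s_t,\mathbf{x}-\hat{\mathbf{x}}^s_{t+1}\rangle \le \langle \nabla\mathcal{R}_1(\hat{\mathbf{x}}^s_{t+1})-\nabla\mathcal{R}_1(\hat{\mathbf{x}}^s_t),\mathbf{x}-\hat{\mathbf{x}}^s_{t+1}\rangle .
\]
The symmetric inequality holds for player $y$ with $\mathbf{r}^s_t$ and $\mathcal{R}_2$.

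First, we bound the right-hand side. Cauchy--Schwarz in the dual pair, together with $G_1$-smoothness of $\mathcal{R}_1$, gives at most $G_1\|\hat{\mathbf{x}}^s_{t+1}-\hat{\mathbf{x}}^s_t\|\,\Omega^s_1$. By item~2 above this is at most $2\epsilon G_1\Omega^s_1$. Dividing by $\eta$ yields
\[
\langle \mathbf{u}^s_t,\mathbf{x}-\hat{\mathbf{x}}^s_{t+1}\rangle\le 2\epsilon G_1\Omega^s_1/\eta .
\]

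Second, we move from $\hat{\mathbf{x}}^s_{t+1}$ to the played iterate $\mathbf{x}^s_t$. We write
\[
\langle \mathbf{u}^s_t,\mathbf{x}-\mathbf{x}^s_t\rangle=\langle \mathbf{u}^s_t,\mathbf{x}-\hat{\mathbf{x}}^s_{t+1}\rangle+\langle \mathbf{u}^s_t,\hat{\mathbf{x}}^s_{t+1}-\mathbf{x}^s_t\rangle .
\]
The second term is at most $\|\mathbf{u}^s_t\|_*\|\hat{\mathbf{x}}^s_{t+1}-\mathbf{x}^s_t\|\le \epsilon\, C^*\|\mathbf{u}^s_t\|_\infty$, using (\ref{eq:singlestep}) and assumption~(i). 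We then replace $\mathbf{u}^s_t$ by $Q^s_{x,t}\mathbf{y}^s_t$; the estimation error $\|\mathbf{u}^s_t-Q^s_{1,t}\mathbf{y}^s_t\|\le\varepsilon$ is absorbed into the same order, or assumed negligible as in \cite{wei2021last}. This gives, for all $\mathbf{x}$,
\[
(\mathbf{x}-\mathbf{x}^s_t)^\top Q^s_{x,t}\mathbf{y}^s_t\le \epsilon\bigl(C^*\|Q^s_{x,t}\mathbf{y}^s_t\|_\infty+2G_1\Omega^s_1/\eta\bigr),
\]
and symmetrically for $y$. Taking the maximum over the two players gives the stated per-state gap.

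Finally, we interpret this stage-game gap as an approximate Nash condition for the policy profile $\mathbf{z}^s_t$. Since $Q^s_{i,t}$ is the Q-matrix built from $V_{i,t-1}$, the left-hand side is exactly the gain of a unilateral deviation at state $s$ against the current continuation values. The resulting bound is what the claim and Theorem~\ref{thm:convergence} call ``approximate Nash''.

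The main obstacle is this last step. Lifting a per-state, one-step gap to a genuine policy-level Nash-gap $\max_{\pi'_i}V^i_{\pi'_i,\pi_{-i}}(\mu)-V^i_\pi(\mu)$ would normally require (a) the gap to hold simultaneously at all states and (b) $V_{i,t-1}$ to be close to the true value of $\mathbf{z}_t$. With those, a performance-difference argument costs a factor $1/(1-\gamma)$. I would first prove the claim as stated in the stage-game sense, which is what the bound's form reflects. I would then note that (a) follows by applying the pigeonhole argument to $\sum_s$ (with the threshold involving $|\mathcal{S}|$), and that (b) follows from the $Q$-value stability lemma of Section~\ref{sec:q}.
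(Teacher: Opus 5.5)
Your argument is essentially the paper's: the first-order (variational-inequality) optimality condition of the OGA/OMD step, then the bound $G_1\|\hat{\mathbf{x}}^s_{t+1}-\hat{\mathbf{x}}^s_t\|\,\Omega^s_1\le 2\epsilon G_1\Omega^s_1$ from smoothness and item~2, then H\"older with $\|\cdot\|_*\le C^*\|\cdot\|_\infty$ and (\ref{eq:singlestep}) to pass to the played iterate, after which the paper likewise silently identifies $\mathbf{u}^s_t$ with $Q^s_{x,t}\mathbf{y}^s_t$ and reads the per-state inequality as the Nash condition (your indexing, with the optimality condition at $\hat{\mathbf{x}}^s_{t+1}$ and the bridge $\|\hat{\mathbf{x}}^s_{t+1}-\mathbf{x}^s_t\|\le\epsilon$, is in fact more consistent with the stated update rule than the paper's version using $\hat{\mathbf{x}}^s_t,\hat{\mathbf{x}}^s_{t-1}$ and $\|\mathbf{x}^s_t-\hat{\mathbf{x}}^s_t\|\le\epsilon$). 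The paper never addresses the lift to a genuine policy-level Nash-gap that you flag, but your step~(b) does not follow from the $Q$-value lemma of Section~\ref{sec:q}: that lemma only controls consecutive differences $Q^s_{x,t}-Q^s_{x,t-1}$ and says nothing about $V_{i,t-1}$ being close to the true value of the current profile $\mathbf{z}_t$, which would require the policies to stay near $\mathbf{z}_t$ over a whole window of iterations rather than near-stationarity at a single $t$.
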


Observe that the maximization problem associated with OMD, which becomes OGA with a square $\ell_2$-norm regularizer,  can be expressed in the following variational inequality form: (Regularizer $\mathcal{R}(v) = \frac{1}{2}\|v\|^2$ in OGA)
\[\langle \mathbf{u}^s_t-\frac{1}{\eta}(\nabla\mathcal{R}(\hat{\mathbf{x}}^s_t)-\nabla\mathcal{R}_{\mathbf{x}}(\hat{\mathbf{x}}^s_{t-1}),\hat{\mathbf{x}}^s-\hat{\mathbf{x}}^s_t\rangle\leq 0,\forall\hat{\mathbf{x}}^s\in\mathcal{Z}^s.\]
Thus, it follows that
\begin{eqnarray} \label{eq:bound}
\langle \mathbf{u}^s_t,\hat{\mathbf{x}}^s-\hat{\mathbf{x}}^s_t\rangle\leq 2\epsilon\frac{G_1\Omega^s_1}{\eta}.
\end{eqnarray}

Moreover, we also have that 
\[|\langle \mathbf{u}^s_t,\mathbf{x}^s-\hat{\mathbf{x}}^s_t\rangle| \leq \|\mathbf{u}^s_t\|_*\|\mathbf{x}^s-\hat{\mathbf{x}}^s_t\| \leq \epsilon C^*\|\mathbf{u}^s_t\|_\infty\]
Combining (\ref{eq:bound}) gives us that for any $\hat{\mathbf{x}}^s\in\mathcal{Z}_1^s$
\begin{eqnarray*}
\langle \mathbf{x}^s_t,\mathbf{u}^s_t\rangle &\geq&\langle\hat{\mathbf{x}}^s_t,\mathbf{u}^s_t\rangle-\epsilon C^*\|\mathbf{u}^s_t\|_\infty\\
&\geq&\langle\hat{\mathbf{x}}^s,\mathbf{u}^s_t\rangle-\epsilon C^*\|\mathbf{u}^s_t\|_\infty-2\epsilon\frac{G_1\Omega^s_1}{\eta}.
\end{eqnarray*}
Similarly, for $\langle \mathbf{y}^s_t,\mathbf{r}^s_t\rangle$ we have the corresponding inequality.

\item Otherwise, $\|\hat{\mathbf{z}}^s_{t+1}-\mathbf{z}^s_t\|^2+\|\mathbf{z}^s_t-\hat{\mathbf{z}}^s_t\|^2>\epsilon^2$ for all~$t$,
\begin{eqnarray} \label{eq:total reg bound}
R_1^T+R_2^T&\leq&\frac{1}{2\eta|\mathcal{S}|}\sum_s\|\hat{\mathbf{z}}^s_1-\mathbf{z}^s_*\|^2-\sum_{t=1}^T\frac{\epsilon^2}{4\eta}\nonumber\\ 
&\leq& -\frac{1}{8\eta}\epsilon^2 T
\end{eqnarray}
as long as $T \geq 4\sum_s\|\hat{\mathbf{z}}^s_1-\mathbf{z}^s_*\|^2/(\epsilon^2|\mathcal{S}|)$. 
Since $R_1^T+R_2^T$ is the total regret of all players from time $1$ to~$T$ in our Markov games, we may conclude from (\ref{eq:total reg bound}) that
\[
\frac{1}{T} \sum^T_{t=1} V_{\pi_t}(s) \geq \frac{\alpha}{1+\beta}V_{\pi^*}(s) + \frac{1}{1+\beta}\frac{\epsilon^2}{8 \eta}.
\] 
\end{itemize}
\end{proof}

\section{Conclusions and Future Work}
A stronger sense of convergences, for instance, last-iterate convergence, than our current one is possible and left for future work. 

Furthermore, we can generalize \cite{anagnostides}[Theorem~A.17] with relaxed assumptions: when the regret has some slightly more general form as in general-sum Markov games in our case, we still can have results of convergence to approximate Nash equilibria or bounds on the price of anarchy. 
\cite{anagnostides}[Theorem~A.17] and our result of Theorem~\ref{thm:convergence} are both special cases.

\subsubsection*{Acknowledgements.} 
We would like to thank Wei-Chen Lin for useful discussions.

%
%
\bibliographystyle{splncs04}
%
\bibliography{multiagent, references}

\appendix







\section{Appendix} \label{app:a}

\begin{definition}[Mixed Nash Equilibrium]
A vector~$(\sigma_1,..., \sigma_n)$ of independent
probability distributions over strategy (or policy in Markov games) sets is a mixed-strategy (or policy in Markov games) Nash equilibrium if no
player can improve her payoff under the product distribution $\mathbf{\sigma} = \sigma_1\times\cdots\times\sigma_n$ via a
unilateral deviation: $\mathbf{E}_{\mathbf{s}\sim\mathbf{\sigma}}[p_i(\mathbf{s})] \geq \mathbf{E}_{\mathbf{s}_{-i}\sim\mathbf{\sigma}_{-i}}[p_i(s'_i,\mathbf{s}_{-i})]$ for all $i$ and $s'_i \in S_i$, where $\sigma_{-i}$ is the product distribution of all $\sigma_j$'s other than $\sigma_i$, $p_i$ is player~$i$'s payoff, and $S_i$ is player~$i$'s strategy (or policy in Markov games) space. \end{definition}

\begin{definition}[Correlated Equilibrium] A correlated equilibrium is a joint probability distribution $\mathbf{\sigma} = \sigma_1\times...\times\sigma_n$ over the strategy (or policy in Markov games) profiles with the property that $\mathbf{E}_{\mathbf{s}\sim\mathbf{\sigma}}[p_i(\mathbf{x})|s_i] \geq \mathbf{E}_{\mathbf{s}\sim\mathbf{\sigma}}[p_i(s'_i,\mathbf{s}_{-i})|s_i]$ for all $i$ and $s_i, s'_i \in S_i$, where $\sigma_\mathbf{s}$ is the probability that the strategy (or policy in Markov games) profile is $\mathbf{s}$, $p_i$ is player~$i$'s payoff, and $S_i$ is player~$i$'s strategy (or policy in Markov games) space.
\end{definition}

\begin{definition}[Coarse Correlated Equilibrium]
A coarse correlated equilibrium
or coarse equilibrium is a joint probability distribution $\mathbf{\sigma}$ over strategy (or policy in Markov games) profiles that satisfy
$\mathbf{E}_{\mathbf{s}\sim\sigma}[p_i(\mathbf{s})] \geq \mathbf{E}_{\mathbf{s}\sim\mathbf{\sigma}}[p_i(s'_i,\mathbf{s}_{-i})]$ for every $i$ and $s_i, s'_i \in S_i$, where $p_i$ is player~$i$'s payoff, and $S_i$ is player~$i$'s strategy (or policy in Markov games) space.
\end{definition}



\end{document}